\documentclass[aps,prl,twocolumn,superscriptaddress]{revtex4}
\usepackage{amsfonts,epsfig}
\usepackage{hyperref}
\usepackage{latexsym}
\usepackage{amsmath}
\usepackage{amssymb}
\usepackage{amsfonts}
\usepackage{amsthm}
\usepackage{mathrsfs}
\usepackage{color,verbatim,graphics}
\usepackage{psfrag}
\usepackage{verbatim,cleveref}

%
%
%


\usepackage{xy}
\xyoption{matrix}
\xyoption{frame}
\xyoption{arrow}
\xyoption{arc}

\usepackage{ifpdf}
\ifpdf
\else
\PackageWarningNoLine{Qcircuit}{Qcircuit is loading in Postscript mode.  The Xy-pic options ps and dvips will be loaded.  If you wish to use other Postscript drivers for Xy-pic, you must modify the code in Qcircuit.tex}
\xyoption{ps}
\xyoption{dvips}
\fi

\entrymodifiers={!C\entrybox}

\newcommand{\bra}[1]{{\left\langle{#1}\right\vert}}
\newcommand{\ket}[1]{{\left\vert{#1}\right\rangle}}
\newcommand{\qw}[1][-1]{\ar @{-} [0,#1]}
\newcommand{\qwx}[1][-1]{\ar @{-} [#1,0]}


\newcommand{\gate}[1]{*+<.6em>{#1} \POS ="i","i"+UR;"i"+UL **\dir{-};"i"+DL **\dir{-};"i"+DR **\dir{-};"i"+UR **\dir{-},"i" \qw}






\newcommand{\control}{*!<0em,.025em>-=-<.2em>{\bullet}}

\newcommand{\ctrl}[1]{\control \qwx[#1] \qw}
\newcommand{\lstick}[1]{*!R!<.5em,0em>=<0em>{#1}}


\newcommand{\Qcircuit}{\xymatrix @*=<0em>}



\setlength{\textheight}{8.72in}
\setlength{\textwidth}{6.8in}
\setlength{\topmargin}{0.0in}
\setlength{\headheight}{0.0in}
\setlength{\headsep}{0.3in}
\setlength{\oddsidemargin}{-.14in}
\setlength{\parskip}{0.3mm}




\newcommand{\braket}[2]{\left \langle #1 | #2 \right \rangle}

\newcommand{\norm}[1]{\left|\left|#1\right|\right|}

\newcommand{\be}{\begin{equation}}
\newcommand{\ee}{\end{equation}}
\newcommand{\ba}{\begin{eqnarray}}
\newcommand{\ea}{\end{eqnarray}}
\newcommand{\ban}{\begin{eqnarray*}}
\newcommand{\ean}{\end{eqnarray*}}


\newtheorem{theorem}{Theorem}

\newcommand{\bracket}[3]{\langle#1|#2|#3\rangle}

\begin{document}

\title{Robust self-testing of the singlet}

\author{Matthew McKague}
\affiliation{Centre for Quantum Technologies, National University of Singapore, 3 Science drive 2, Singapore 117543}
\author{Tzyh Haur Yang}
\affiliation{Centre for Quantum Technologies, National University of Singapore, 3 Science drive 2, Singapore 117543}
\author{Valerio Scarani}
\affiliation{Centre for Quantum Technologies, National University of Singapore, 3 Science drive 2, Singapore 117543}
\affiliation{Department of Physics, National University of Singapore, 2 Science drive 3, Singapore 117542}

\begin{abstract}
In this paper, we introduce a general framework to study the concept of robust self testing which can be used to self test EPR pairs and local measurement operators. The result is based only on probabilities obtained from experiment, with tolerance to experimental errors. In particular, we show that if results of experiment come approach the Cirel'son bound, or approximates the Mayers-Yao type correlation, then the experiment must contain an approximate EPR pair. More specifically, there exist local bases in which the physical state is close to an EPR pair, possibly all encoded in a larger environment or ancilla.  Moreover, in theses bases the measurements are close to the qubit operators used to achieve the Cirel'son bound or the Mayers-Yao results.
\end{abstract}

\maketitle

\textit{Introduction. ---} It is well known by now that the correlations obtained by measuring entangled quantum systems cannot be reproduced with classical resources. In fact, for some of these correlations, a much stronger statement holds: they can be reproduced only by measuring a specific quantum state in a specific way. To date, two such examples are known for the bipartite case. One uses the Clauser-Horne-Shimony-Holt (CHSH) criterion \cite{Clauser:1969:Proposed-Experi} to state the following: if the maximal quantum value $\textrm{CHSH}=2\sqrt{2}$ \cite{Cirelson:1980:Quantum-general} is observed, then the state being measured is necessarily equivalent (in a sense to be made rigorous below) to a maximally entangled state of two qubits, which will be referred to as ``singlet" from now on. Moreover, both the measurements on Alice and the measurements on Bob must anti-commute \cite{Popescu:1992:Which-states-vi} \cite{Braunstein:1992:Maximal-violati}. The other criterion is due to Mayers and Yao: it uses a different observations to reach the same conclusion \cite{Mayers:2004:Self-testing-qu}. Since the Mayers-Yao correlations cannot reach $\textrm{CHSH}=2\sqrt{2}$, the two criteria are inequivalent. Compactly, we shall say that these two criteria realize the \textit{self-testing} of the singlet and of some measurements.

The possibility of self-testing is all the more remarkable because nothing is assumed \textit{a priori} on the physical system or on the measurements, not even the dimension of the relevant Hilbert space: in principle, these are \textit{device-independent} assessments, based only on the observed statistical data. Device-independent assessment has been discussed in various scenarios, including adversarial ones, which may provide the ultimate test of trustfulness. More realistic, and probably more relevant for today's physics, is a scenario in which neither the experimentalists nor nature are assumed to cheat, but where one wants a simple and direct check that nothing serious is going wrong, that there are no undesired side channels etc.

In order to be practical, a self-testing procedure must be \textit{robust}, i.e.\ tolerate deviations from the theoretically ideal case. A mathematical \textit{tour de force} has recently provided a robustness bound for the Mayers-Yao test \cite{Magniez:2006:Self-testing-of}. To our knowledge, no robust bound is available for the CHSH test, a situation that plagues the applicability of the corresponding device-independent assessment of entanglement of a source \cite{Bardyn:2009:Device-independ} and a measurement \cite{Rabelo:2011:Device-Independ}.

In this paper, we prove a general sufficient criterion for a set of correlations to provide robust self-testing of the singlet. Then we prove that both the CHSH and the Mayers-Yao tests satisfy this criterion and give the explicit bounds. The proofs use rather elementary quantum mathematics, following the simplification of the Mayers-Yao proof by one of us \cite{McKague:2010:Quantum-Informa,McKague:2010:Self-testing-gr,McKague:2011:Generalized-Sel}.

\textit{Definitions and notation. ---} We are aiming at self-testing the presence of a maximally entangled state of two qubits in unknown devices. This goal calls for a suitable definition. Indeed, there is nothing like an isolated qubit in nature: if one wants to measure the spin of an electron, the whole electron with its wavefunction is present; and if the qubit is the polarization of an optical mode, we are allowing the whole electromagnetic field to be present. So there will surely be degrees of freedom which do not encode the state of interest, but are nevertheless present. Also, there must be a local frame of reference for each device in order to define the measurements. Because of these two facts, our definition must allow for additional ancillas and local changes of basis. We do so by using an \emph{isometry}, that is a linear map $\Phi: \mathcal{H}_{A} \rightarrow \mathcal{H}_{B}$ that preserves inner products. As a concrete example, adding an ancilla and applying a unitary to the total system is an isometry.

Now we are ready to formalize our definition.  We say that a pair of devices $A$ and $B$ hold a pair of maximally entangled qubits if there exists a local isometry $\Phi = \Phi_{A} \otimes \Phi_{B}$ that takes the state $\ket{\psi^{\prime}}_{AB}$ to
\be
\Phi\left(\ket{\psi^{\prime}}_{AB}\right) = \ket{junk}_{AB}\ket{\phi_{+}}_{AB}
\ee
and physical observables $M^{\prime}_{A}$ and $N^{\prime}_{B}$ operate as
\begin{equation}
\Phi\left(M^{\prime}_{A}N^{\prime}_{B}\ket{\psi^{\prime}}_{AB}\right) = \ket{junk}_{AB}M_{A}N_{B} \ket{\phi_{+}}_{AB}
\end{equation}
for some $M_{A}$ and $N_{B}$ to be specified later.  In other words, we aim at checking that there exist a choice of local bases such that (i) the state looks like an ancilla tensored with a maximally entangled pair of qubits; and (ii) the measurements act non-trivially only on the pair of qubits.

A word about notation: we use primed notation ($X^{\prime}$, $\ket{\psi^{\prime}}$ etc.) to represent the observables and states in the actual quantum devices. These will be unknown (even their dimensions) except for a few properties that we shall specify. Non-primed operators $X$ and $Z$ refer to Pauli operators while the singlet is given by $\ket{\phi_{+}} = \frac{1}{\sqrt{2}}(\ket{00} + \ket{11})$.


\textit{Circuit for self-testing. ---} We start by presenting a set of sufficient conditions to self-test the singlet along with the associated measurement operators. The \textit{state} $\ket{\psi^{\prime}}$ can be taken as pure without loss of generality, since the dimension is not fixed and one can always add the ancillas for purification. We assume further that the state is always the same in each run of the experiment, which is reasonable in the non-adversarial scenario (this assumption could be removed, for instance, using Azuma's inequality as in \cite{Pironio:2009:Device-independ}; we have decided to deal with this technical complication in a full-length publication). The \textit{measurement settings} are denoted by $\{A'_0,A'_1,\ldots\}$ on Alice's side and $\{B'_0,B'_1,\ldots\}$ on Bob's side. For all that follows, it is a crucial assumption that $[A'_j,B'_k]=0$: this can ultimately be enforced by space-like separation of the measurement; but one may be less demanding and take simple spatial separation as a sufficient guarantee of commutation.

With these notations and assumptions, the following theorem holds:

\begin{theorem}\label{theorem:localisometry}
Suppose that from the observed correlations, one can deduce the existence of local observables $\{X^{\prime}_{A}$, $Z^{\prime}_{A}\}$ (functions of $A'_i$), and $\{X^{\prime}_{B}$, $Z^{\prime}_{B}\}$ (functions of $B'_i$)  with eigenvalues $\pm 1$, which act on the bipartite state $\ket{\psi^{\prime}}$ such that
\begin{eqnarray}
\label{eq:th1cond1}
||(X^{\prime}_{A} Z^{\prime}_{A} +  Z^{\prime}_{A} X^{\prime}_{A} ) \ket{\psi^{\prime}}|| & \leq & 2\epsilon_1, \\
\label{eq:th1cond2}
||(X^{\prime}_{B} Z^{\prime}_{B} +  Z^{\prime}_{B} X^{\prime}_{B} ) \ket{\psi^{\prime}}|| & \leq & 2\epsilon_1, \\
\label{eq:th1cond3}
||(X^{\prime}_{A} -  X^{\prime}_{B}) \ket{\psi^{\prime}}|| &\leq &  \epsilon_2, \\
\label{eq:th1cond4}
||(Z^{\prime}_{A} -  Z^{\prime}_{B}) \ket{\psi^{\prime}}|| &\leq &  \epsilon_2\,. 
\end{eqnarray}
Then there exists a local isometry $\Phi = \Phi_{A}  \otimes \Phi_{B}$ and a state $\ket{junk}_{AB}$ such that
\begin{equation}
\norm{ \Phi(M^{\prime}_{A} N^{\prime}_{B}\ket{\psi^{\prime}}) - \ket{junk}_{AB} M_{A}N_{B} \ket{\phi_{+}}_{AB}} \leq \varepsilon \label{eq:selftest}
\end{equation}
for $M,N \in \{I,X,Z\}$ and $\varepsilon=(11\epsilon_1 + 5\epsilon_2)/2$.

\end{theorem}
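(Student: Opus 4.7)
The plan is to exhibit a local isometry $\Phi = \Phi_A \otimes \Phi_B$ explicitly as a ``swap-like'' quantum circuit, then to verify inequality~(\ref{eq:selftest}) by running the same circuit algebra in parallel under the ideal and the approximate hypotheses, and tracking the accumulated error. I would define $\Phi_A$ by appending a fresh qubit ancilla $\ket{0}_a$, applying a Hadamard to it, then a controlled-$Z'_A$ (ancilla as control, Alice's physical system as target), another Hadamard, and finally a controlled-$X'_A$; $\Phi_B$ is defined symmetrically on Bob's side. Expanding $\Phi(\ket{\psi'})$ in the ancilla computational basis produces
\[
\tfrac14 (I+Z'_A)(I+Z'_B) \sum_{k,l\in\{0,1\}} \ket{k}_a\ket{l}_b (X'_A)^k (X'_B)^l \ket{\psi'}.
\]
In the ideal limit, anti-commutation together with the relations $X'_A\ket{\psi'} = X'_B\ket{\psi'}$ and $Z'_A\ket{\psi'} = Z'_B\ket{\psi'}$ forces the $\ket{01}$ and $\ket{10}$ components to cancel and collapses the remaining sum to $\ket{junk}\otimes\tfrac{1}{\sqrt2}(\ket{00} + \ket{11})_{ab}$ with $\ket{junk} \propto (I+Z'_A)\ket{\psi'}$. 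My first main step is to re-execute the same algebra under the approximate hypotheses, charging $\epsilon_1$ through~(\ref{eq:th1cond1})-(\ref{eq:th1cond2}) for every anti-commutator slip and $\epsilon_2$ through~(\ref{eq:th1cond3})-(\ref{eq:th1cond4}) for every Alice/Bob swap, and assembling the resulting triangle inequalities into an explicit bound on $\norm{\Phi(\ket{\psi'}) - \ket{junk}\ket{\phi_+}}$.

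The second main step handles the measurements. Since $A$- and $B$-observables commute (by spatial separation), it suffices to establish four ``transfer'' estimates of the form $\norm{\Phi(Z'_A\ket{\psi'}) - Z_a\Phi(\ket{\psi'})} \lesssim \epsilon_1 + \epsilon_2$, and similarly for $X'_A$, $Z'_B$, $X'_B$. Each transfer follows from the same kind of circuit computation as above: insert the extra operator on the appropriate physical wire before the swap gadget, commute or anti-commute it through the gates using the approximate relations, and verify that its net effect on the output is exactly the corresponding Pauli acting on the ancilla. Applying one transfer to push $M'_A$ and a second to push $N'_B$ through $\Phi(M'_A N'_B\ket{\psi'})$, then invoking the first main step, yields~(\ref{eq:selftest}) uniformly for all nine choices $M, N \in \{I, X, Z\}$; the three cases in which some factor is the identity reduce directly to the first main step.

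The real difficulty is the bookkeeping required to land on the announced constant $\tfrac12(11\epsilon_1 + 5\epsilon_2)$. One must order the approximate substitutions carefully so that no error term is double-counted, exploit repeatedly that $X'_A$, $Z'_A$, $X'_B$, $Z'_B$ have $\pm 1$ eigenvalues and therefore preserve norm (so that inserting them into estimates costs nothing), and organize the two-operator case so that each condition contributes only once even though both $M'_A$ and $N'_B$ are transported through the circuit. The circuit algebra itself is routine once the isometry is fixed; the principal technical task is extracting the tightest error-propagation bound.
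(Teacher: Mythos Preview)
Your isometry and overall strategy coincide with the paper's: the same swap circuit, and the same idea of commuting the physical operators through the gates (what you call ``transfer estimates'' is exactly the paper's ``propagating $M'_A$ and $N'_B$ to the left''). Two points, however, deserve attention.

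First, a minor slip: your displayed expansion of $\Phi(\ket{\psi'})$ is not the raw circuit output. The circuit actually yields four terms with projectors $(I\pm Z'_A)(I\pm Z'_B)$ carrying \emph{different} signs according to the ancilla label; for instance the $\ket{11}$ coefficient is $\tfrac14\, X'_A X'_B(I-Z'_A)(I-Z'_B)\ket{\psi'}$. Your formula with a common $(I+Z'_A)(I+Z'_B)$ prefactor holds only \emph{after} an anti-commutation move, so presenting it as the starting point hides an $\epsilon_1$ cost you still have to pay. You would catch this on execution.

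Second, and more substantively: you never address the normalization of $\ket{junk}$. Writing $\ket{junk}\propto (I+Z'_A)\ket{\psi'}$ is fine in the ideal limit, but in the robust case the candidate $\tfrac{1}{2\sqrt2}(I+Z'_A)(I+Z'_B)\ket{\psi'}$ is only approximately a unit vector, and replacing it by a genuine state costs an additional $(\epsilon_1+\epsilon_2)/2$. Controlling that cost requires the bound $|\bra{\psi'}Z'_A\ket{\psi'}|\le \epsilon_1+\epsilon_2$ (and likewise for $Z'_B$), which is \emph{not} mere bookkeeping: it needs the approximate anti-commutation~(\ref{eq:th1cond1}) and the Alice--Bob correlation~(\ref{eq:th1cond3}) to be combined in a non-obvious way, and the paper singles it out as the most tedious estimate in the whole argument. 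Without this step you cannot land on $(11\epsilon_1+5\epsilon_2)/2$: the state bound $\tfrac32\epsilon_1+\tfrac52\epsilon_2$, and hence the final $\tfrac{11}{2}\epsilon_1$ after adding the $4\epsilon_1$ operator-propagation cost, both depend on it.
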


\begin{proof}
The isometry is constructed as in figure~\ref{fig:epr_local_unitary_circuit}.
\begin{figure}[htbp!]
\[
\Qcircuit @C=0.5cm @R=.3cm {
\lstick{\ket{0}}  & \gate{H} & \ctrl{1}  & \gate{H} & \ctrl{1} &\qw \\
 &  & \gate{Z^{\prime}_{A}} & \qw & \gate{X^{\prime}_{A}} & \qw\\
 \lstick{M^{\prime}_{A} N^{\prime}_{B}\ket{\psi^{\prime}}} \ar@{-}[ur] \ar@{-}[dr] & & & & & \\
              &    & \gate{Z^{\prime}_{B}} & \qw & \gate{X^{\prime}_{B}} & \qw\\
\lstick{\ket{0}}     & \gate{H} & \ctrl{-1} & \gate{H} & \ctrl{-1} & \qw
}
\]
\caption{Local isometry $\Phi$, where $M,N\in \{I,X,Z\}$}
\label{fig:epr_local_unitary_circuit}
\end{figure}
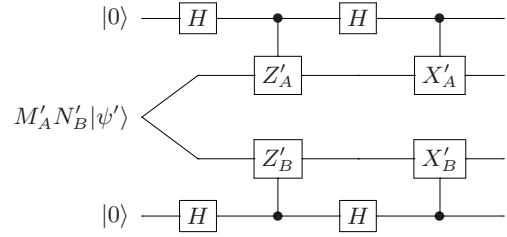
For the case where $M=N=I$,the isometry gives 
\begin{eqnarray}
	 \Phi(\ket{\psi^{\prime}}) & = & 
	 \frac{1}{4} (I + Z^{\prime}_{A})  (I + Z^{\prime}_{B}) \ket{\psi^{\prime}}\ket{00} \nonumber\\
	 & + & \frac{1}{4} X^{\prime}_{B}(I + Z^{\prime}_{A})   (I - Z^{\prime}_{B}) \ket{\psi^{\prime}}\ket{01} \nonumber\\
	 & + & \frac{1}{4}X^{\prime}_{A}(I- Z^{\prime}_{A})  (I + Z^{\prime}_{B}) \ket{\psi^{\prime}}\ket{10}\nonumber\\
	 & + & \frac{1}{4} X^{\prime}_{A}   X^{\prime}_{B} (I - Z^{\prime}_{A})(I-Z^{\prime}_{B}) \ket{\psi^{\prime}}\ket{11}.\label{eq:phiphi}
\end{eqnarray}
The rest of the proof is a series of estimates based on (\ref{eq:th1cond1}-\ref{eq:th1cond4}) and on the fact that the operators are unitary and Hermitian. The bounds are rather simple to derive (and they are obvious in the limiting case $\epsilon_1=\epsilon_2=0$); for clarity, we give only the results in the main text and provide the detailed derivation in the Supplementary Information.

In the expression for $\Phi(\ket{\psi^{\prime}})$ above, the second and third line are each bounded by $\epsilon_2/2$, while the last line differs from the first by $\epsilon_1+\epsilon_2$. From these, we have 
\begin{equation}
	\norm{
		\Phi\left(
			\ket{\psi^{\prime}}
		\right) 
	- 
		\frac{	(I + Z^{\prime}_{A})(I + Z^{\prime}_{B})}{2 \sqrt{2}} 
		\ket{\psi^{\prime}} \ket{\phi_{+}}
	} \leq 
	\epsilon_{1} + 2 \epsilon_{2}. \label{robustsoon}
\end{equation}
This is already the desired form and we would like to identify $\frac{	(I + Z^{\prime}_{A})(I + Z^{\prime}_{B})}{2 \sqrt{2}} \ket{\psi^{\prime}}$ with $\ket{junk}$; but the latter is supposed to be normalized, while the former may not be (unless $\epsilon_1=\epsilon_2=0$); so we have to estimate the error that is introduced by normalizing the state. This is found to be $(\epsilon_1 + \epsilon_2)/2$, the most tedious estimate being the one that bounds from above both $|\bracket{\psi'}{Z_A'}{\psi'}|$ and $|\bracket{\psi'}{Z_B'}{\psi'}|$ with $\epsilon_1 + \epsilon_2$. All in all therefore \begin{equation}
	\norm{
		\Phi(\ket{\psi^{\prime}})
		 - 
		 \ket{junk}\ket{\phi_{+}}
	} \leq
	\frac{3}{2} \epsilon_{1} + \frac{5}{2} \epsilon_{2}\,.
\end{equation}
This is the self-testing bound for the state. In order to derive the bound for the action of the operators, we notice that $\Phi\left(M^{\prime}_{A}  N^{\prime}_{B}  \ket{\psi^{\prime}}\right)=\frac{1}{4} (I + Z^{\prime}_{A})  (I + Z^{\prime}_{B}) M^{\prime}_{A}N^{\prime}_{B}\ket{\psi^{\prime}}\ket{00} + \textrm{(similar terms)}$. One starts by propagating $M^{\prime}_{A}$ and $N^{\prime}_{B}$ to the left using \eqref{eq:th1cond1} and \eqref{eq:th1cond2}. In the worst case, i.e.\ when both $M^{\prime}_{A}$ and $N^{\prime}_{B}$ are not the identity, this preliminary step adds $4 \epsilon_{1}$ to the bound. The resulting expression is analogous to (\ref{eq:phiphi}): then, one follows the same steps as above. 
\end{proof}

This theorem implies that there exist some local bases in which, up to some small error, the state shared by the quantum devices is a singlet together with some ancillas in an unknown state, and the derived operators $X^{\prime}$ and $Z^{\prime}$ operate only on the singlet. Additionally, in these local bases, $X^{\prime}$ and $Z^{\prime}$ are close to the Pauli $X$ and $Z$. In the remainder of the paper, we are going to show that (\ref{eq:th1cond1}-\ref{eq:th1cond4}) follow from both the CHSH and Mayers-Yao correlation experiments: therefore, both experiments can be used for robust self-testing.


\textit{Robsut self-testing using CHSH. ---} As mentioned above, a robustness bound for the CHSH-based self-testing was missing. We provide it here:

\begin{theorem}\label{theorem:chshrobust}
Suppose that the observables $A'_0$, $A'_1$, $B'_0$ and $B'_1$ with eigenvalues $\pm 1$, acting on a state $\ket{\psi^{\prime}}$, are such that
\begin{equation}
	\bra{\psi^{\prime}}\left(   A'_0B'_0 + A'_0B'_1+A'_1B'_0-A'_1B'_1     \right)  \ket{\psi^{\prime}} \geq 2 \sqrt{2} -\epsilon, \label{chsherror}
\end{equation}
where $0<\epsilon<1$. Then the conditions of theorem~\ref{theorem:localisometry} are satisfied with $\epsilon_1=2(\epsilon\sqrt{2})^{1/2}$ and $\epsilon_2=4(\epsilon\sqrt{2})^{1/4}$.
\end{theorem}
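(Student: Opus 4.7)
The plan is to reduce the CHSH hypothesis to the four conditions of Theorem~\ref{theorem:localisometry} by introducing Bob's familiar rotated operators $\tilde{Z}_B = (B'_0+B'_1)/\sqrt{2}$ and $\tilde{X}_B = (B'_0-B'_1)/\sqrt{2}$. Using only $(A'_i)^2 = (B'_j)^2 = I$ and $[A'_i,B'_j] = 0$, one verifies the sum-of-squares identity
\begin{equation*}
(A'_0 - \tilde{Z}_B)^2 + (A'_1 - \tilde{X}_B)^2 = 4I - \sqrt{2}\,(A'_0B'_0 + A'_0B'_1 + A'_1B'_0 - A'_1B'_1),
\end{equation*}
so that the hypothesis \eqref{chsherror} immediately yields $\|(A'_0-\tilde{Z}_B)\ket{\psi'}\|^2 + \|(A'_1-\tilde{X}_B)\ket{\psi'}\|^2 \leq \sqrt{2}\epsilon$, and in particular each individual norm is at most $(\epsilon\sqrt{2})^{1/2}$. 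A short algebraic check also gives the exact relations $\tilde{X}_B\tilde{Z}_B + \tilde{Z}_B\tilde{X}_B = 0$ and $\tilde{X}_B^2 + \tilde{Z}_B^2 = 2I$, so $\|\tilde{X}_B\|,\|\tilde{Z}_B\| \leq \sqrt{2}$.

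I take $Z'_A = A'_0$ and $X'_A = A'_1$ directly since these are already $\pm 1$-valued. For Bob I must regularize, and the natural choice is $X'_B = \mathrm{sign}(\tilde{X}_B)$, $Z'_B = \mathrm{sign}(\tilde{Z}_B)$. The key estimate is
\begin{equation*}
\|(\tilde{X}_B - X'_B)\ket{\psi'}\|^2 = \bra{\psi'}(|\tilde{X}_B| - I)^2\ket{\psi'} \leq \bra{\psi'}|\tilde{X}_B^2 - I|\ket{\psi'} \leq \|(\tilde{X}_B^2 - I)\ket{\psi'}\|,
\end{equation*}
where the first operator inequality is the pointwise bound $(|t|-1)^2 \leq |t^2-1|$ and the second is Cauchy--Schwarz. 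The right-hand side is controlled by $\tilde{X}_B^2\ket{\psi'} \approx \tilde{X}_B A'_1\ket{\psi'} = A'_1\tilde{X}_B\ket{\psi'} \approx (A'_1)^2\ket{\psi'} = \ket{\psi'}$, where each approximation incurs an error of order $(\epsilon\sqrt{2})^{1/2}$. Taking square roots produces the characteristic fourth-root loss $\|(\tilde{X}_B - X'_B)\ket{\psi'}\| = O((\epsilon\sqrt{2})^{1/4})$, and the triangle inequality with the first step delivers conditions \eqref{eq:th1cond3} and \eqref{eq:th1cond4} with room to spare inside $\epsilon_2 = 4(\epsilon\sqrt{2})^{1/4}$ for $\epsilon < 1$.

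Finally, I must verify the two anti-commutation conditions. For Alice I expand
\begin{eqnarray*}
\{A'_0,A'_1\}\ket{\psi'} &=& \{\tilde{Z}_B,\tilde{X}_B\}\ket{\psi'} + \tilde{X}_B(A'_0-\tilde{Z}_B)\ket{\psi'} + A'_0(A'_1-\tilde{X}_B)\ket{\psi'} \\
&& +\, \tilde{Z}_B(A'_1-\tilde{X}_B)\ket{\psi'} + A'_1(A'_0-\tilde{Z}_B)\ket{\psi'},
\end{eqnarray*}
where the first term vanishes exactly and $[A'_i,\tilde{B}_j] = 0$ has been used to slide tildes past Alice operators; applying Cauchy--Schwarz to the four remaining terms with $\|\tilde{B}\|^2 \leq 2$ and the SOS estimate gives $\|\{A'_0,A'_1\}\ket{\psi'}\| \leq 2\epsilon_1$. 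For Bob the argument is more delicate: since $\tilde{X}_B$ and $\tilde{Z}_B$ anti-commute exactly, Jordan's lemma decomposes the Hilbert space into invariant subspaces of dimension at most two, and on each block where neither tilde has a vanishing eigenvalue the signs $X'_B,Z'_B$ act as genuine Pauli matrices and anti-commute exactly. The anti-commutator is thus supported on $\ker\tilde{X}_B \cup \ker\tilde{Z}_B$, and $\ket{\psi'}$ has small projection onto either kernel, again controlled by $\|(\tilde{X}_B^2 - I)\ket{\psi'}\| = O(\sqrt{\epsilon})$ combined with $\tilde{X}_B^2 + \tilde{Z}_B^2 = 2I$. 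The main obstacle throughout is to make the constants in these propagations tight enough to land inside the stated $\epsilon_1,\epsilon_2$; triangle-inequality bookkeeping must in several places be replaced by Cauchy--Schwarz to save the factor of order unity that is the difference between a loose bound and the claimed one.
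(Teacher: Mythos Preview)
Your proof is correct in substance but follows a genuinely different route from the paper's. The paper's central identity is the \emph{square} of the Bell operator, $C^2 = 4I + [A'_0,A'_1][B'_1,B'_0]$, combined with Cauchy--Schwarz $\langle C^2\rangle \geq \langle C\rangle^2$; from this they first extract the approximate anti-commutation of $A'_0,A'_1$ (and of $B'_0,B'_1$), and only afterwards, through several further estimates, obtain the closeness of $A'_i$ to $(B'_0\pm B'_1)/\sqrt{2}$. You instead start from the sum-of-squares decomposition $(A'_0-\tilde Z_B)^2+(A'_1-\tilde X_B)^2 = 4I-\sqrt{2}\,C$, which delivers the $A'_i\approx\tilde B$ relations \emph{immediately} and then deduces anti-commutation by propagation. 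Your path is shorter for conditions \eqref{eq:th1cond3}--\eqref{eq:th1cond4}; the paper's path gives \eqref{eq:th1cond1} more directly. Both use the same regularisation $X'_B=\mathrm{sign}(\tilde X_B)$, $Z'_B=\mathrm{sign}(\tilde Z_B)$.

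One point where your argument is actually more careful: the paper asserts $\{X'_B,Z'_B\}=0$ exactly, but with the convention that $\mathrm{sign}$ equals $I$ on the kernel this can fail on $\ker\tilde X_B\cup\ker\tilde Z_B$; you handle this explicitly. Your kernel bound via $\|(I-\tilde X_B^2)\ket{\psi'}\|$ gives $\|\{X'_B,Z'_B\}\ket{\psi'}\|\leq 2(1+\sqrt{2})(\epsilon\sqrt{2})^{1/2}$, which slightly overshoots $2\epsilon_1$. A sharper alternative: since $A'_1$ commutes with $\tilde X_B$ and hence with the projection $P_X$ onto $\ker\tilde X_B$, one has $\|P_X\ket{\psi'}\|=\|P_X A'_1\ket{\psi'}\|=\|P_X(A'_1-\tilde X_B)\ket{\psi'}\|\leq(\epsilon\sqrt{2})^{1/2}$, which brings the anti-commutator bound safely inside $2\epsilon_1$. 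With that tweak your constants fit; without it you obtain the same orders with a marginally larger $\epsilon_1$.
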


\begin{proof}
To establish the theorem, we need to show the existence of four local, Hermitian and unitary operators $X_A'$, $Z_A'$, $X_B'$ and $Z_B'$ that satisfy (\ref{eq:th1cond1})-(\ref{eq:th1cond4}). We are going to show this for
\begin{align}
	\begin{array}{ll}
	X_A' =A_0', & Z_A' = A_1', \\
	X_B' = \dfrac{B_0'+B_1'}{|B_0'+B_1'|}, & 	Z_B' = \dfrac{B_0'-B_1'}{|B_0'-B_1'|},
\end{array}
	\label{defnition}
\end{align}
where $|M| = \sqrt{M^{2}}$. Clearly they are all unitary and Hermitian \footnote{If $M$ has a subspace with eigenvalue 0, the eigenvalue of $M/|M|$ in that subspace is taken to be 1.}. Moreover, $\{X_B',Z_B'\}=0$ by construction, thus establishing a tighter version of \eqref{eq:th1cond2}. All the subsequent steps are again somehow pedestrian, so we sketch them here and leave the full details for the Supplementary Information. 

From \eqref{chsherror}, a suitable use of the Cauchy-Schwartz and the triangle inequalities leads to
\begin{eqnarray}
	\norm{\{A_0',A_1'\}\ket{\psi'}}& \leq & 2\epsilon_{1}, 	\label{eq:a0a1}\\
	\norm{\{B_0',B_1'\}\ket{\psi'}}& \leq & 2 \epsilon_{1} \label{eq:b0b1}
\end{eqnarray}
with $\epsilon_1=2\sqrt{\epsilon\sqrt{2}}$. Then \eqref{eq:th1cond1} is established in \eqref{eq:a0a1}.

The third condition \eqref{eq:th1cond3} is proved by obtaining first the bound 
$\norm{\left(X^{\prime}_{A} - (B_0' + B_1')/\sqrt{2}\right) \ket{\psi^{\prime}}}\leq 2(\epsilon\sqrt{2})^{1/4}$, then the same bound for 
$\norm{\left(X^{\prime}_{B} - (B_0' + B_1')/\sqrt{2}\right) \ket{\psi^{\prime}}}$; both derivations using \eqref{chsherror} at one point. The triangle inequality completes the estimate. The proof of \eqref{eq:th1cond4} follows the same steps.
\end{proof}

Notice that inequality~\eqref{eq:selftest} applies for $M^{\prime}_{A} = A_{0} = X^{\prime}_{A}$ and $M^{\prime}_{A} = A_{1} = Z^{\prime}_{A}$, $N^{\prime}_{B}= X^{\prime}_{B}$ and $N^{\prime}_{B} = Z^{\prime}_{B}$. One may want to have a self-testing bound for the operators that are really measured, $B'_{0}$ and $B'_{1}$, which are not linear functions of the previous ones. The inequality for $B^{\prime}_{0}$ is found by using linearity in the estimations $\norm{\left(X^{\prime}_{A} - (B_0' + B_1')/\sqrt{2}\right) \ket{\psi^{\prime}}}\leq 2(\epsilon\sqrt{2})^{1/4}$ and $\norm{\left(Z^{\prime}_{A} - (B_0' - B_1')/\sqrt{2}\right) \ket{\psi^{\prime}}}\leq 2(\epsilon\sqrt{2})^{1/4}$ from the proof, then using the fact that isometries preserve the 2-norm. We obtain
\begin{widetext}
\begin{equation}
\norm{ \Phi\left(M^{\prime}_{A} B'_{0}\ket{\psi^{\prime}}\right) - \ket{junk}_{AB} M_{A} \frac{X_{B} + Z_{B}}{\sqrt{2}}\ket{\phi_{+}}_{AB}} \leq \sqrt{2}\epsilon + 2 \sqrt{2}\left(\epsilon \sqrt{2}\right)^{\frac{1}{4}}.
\end{equation}
\end{widetext}
The analogous result holds for $B'_{1}$.


\textit{Robust self-testing using the Mayers-Yao criterion. ---} We turn now to the robustness bound for the Mayers-Yao correlations. The original scenario uses three measurements on Alice's side and three on Bob's side; as a matter of fact though, only two measurements are needed by (say) Alice, so we work in this more economic case.

We have the following theorem:

\begin{theorem}\label{thm:myrobust}
Let $0 < \epsilon <1 $ be given and let a bipartite state $\ket{\psi^{\prime}}$ and observables $X^{\prime}_{A}$, $Z^{\prime}_{A}$, $X^{\prime}_{B}$, $Z^{\prime}_{B}$, and $D^{\prime}_{B}$ with eigenvalues $\pm 1$, be given such that 
\begin{equation}
	\left|\bra{\psi^{\prime}} M^{\prime}_{A}N^{\prime}_{B} \ket{\psi^{\prime}} - \bra{\phi_{+}} M_{A}  N_{B} \ket{\phi_{+}} \right| \leq \epsilon \label{mycond}
\end{equation}
holds for all $M \in \{X,Z\}$ and $N \in \{X,Z,D\}$ where $D=(X+Z)/\sqrt{2}$. Then the conditions of theorem~\ref{theorem:localisometry} are satisfied with $\epsilon_1=2(1+\sqrt{2})(2\epsilon)^{1/4}+4\sqrt{2\epsilon}+\frac{5+3\sqrt{2}}{2}(2\epsilon)^{3/4}$ and $\epsilon_2=\sqrt{2\epsilon}$.
\end{theorem}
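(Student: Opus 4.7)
The plan is to verify the four conditions \eqref{eq:th1cond1}--\eqref{eq:th1cond4} of Theorem~\ref{theorem:localisometry} with the stated $\epsilon_1$ and $\epsilon_2$.

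The conditions \eqref{eq:th1cond3}--\eqref{eq:th1cond4} are immediate. Applying \eqref{mycond} with $(M,N)=(X,X)$ yields $\bracket{\psi'}{X'_A X'_B}{\psi'}\geq 1-\epsilon$, so $\norm{(X'_A-X'_B)\ket{\psi'}}^2 = 2-2\bracket{\psi'}{X'_A X'_B}{\psi'} \leq 2\epsilon$, and analogously with $ZZ$. This establishes $\epsilon_2=\sqrt{2\epsilon}$.

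The anticommutation conditions \eqref{eq:th1cond1}--\eqref{eq:th1cond2} are the substance of the theorem, since anticommutation is not visible in any individual two-body correlator; it must be extracted from the structure that makes $D=(X+Z)/\sqrt{2}$ consistent with the measured statistics. The key step is to introduce the auxiliary operator $\tilde D_A := (X'_A+Z'_A)/\sqrt{2}$, not as a measurement but as a formal combination on Alice's side, and to prove that
\[ \delta := \norm{(\tilde D_A - D'_B)\ket{\psi'}} = O(\epsilon^{1/4}). \]
Because $\tilde D_A$ lives on Alice while $D'_B$ lives on Bob they commute, and because the eigenvalues of $D'_B$ are $\pm1$ we have $(D'_B)^2=I$; together these yield the identity $\tilde D_A^{\,2}\ket{\psi'}-\ket{\psi'} = \tilde D_A(\tilde D_A - D'_B)\ket{\psi'} + D'_B(\tilde D_A - D'_B)\ket{\psi'}$, whose norm is bounded by $(\norm{\tilde D_A}_{\mathrm{op}} + 1)\,\delta \leq (\sqrt{2}+1)\,\delta$. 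Since $\tilde D_A^{\,2} = I + \{X'_A,Z'_A\}/2$, this converts into $\norm{\{X'_A,Z'_A\}\ket{\psi'}} \leq 2(\sqrt{2}+1)\,\delta$, producing the leading $2(1+\sqrt{2})(2\epsilon)^{1/4}$ contribution to $\epsilon_1$.

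The subtle point is that $\delta$ itself cannot be controlled by the norm of the anticommutator one is trying to bound (circular). A direct expansion gives $\delta^2 = \tfrac12\bracket{\psi'}{\{X'_A,Z'_A\}}{\psi'} + 2 - \sqrt{2}\bigl(\bracket{\psi'}{X'_A D'_B}{\psi'}+\bracket{\psi'}{Z'_A D'_B}{\psi'}\bigr) \leq \tfrac12\bracket{\psi'}{\{X'_A,Z'_A\}}{\psi'} + 2\sqrt{2}\,\epsilon$, so it is enough to bound the \emph{expectation} of the anticommutator by an independent route. For this I use $Z'_A\ket{\psi'}\approx Z'_B\ket{\psi'}$ from the first paragraph, together with Cauchy--Schwarz, to convert $\bracket{\psi'}{X'_A Z'_A}{\psi'}$ into $\bracket{\psi'}{X'_A Z'_B}{\psi'}$, which is bounded by $\epsilon$ because the ideal value $\bracket{\phi_+}{XZ}{\phi_+}$ vanishes. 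Symmetric treatment of $\bracket{\psi'}{Z'_A X'_A}{\psi'}$ gives $|\bracket{\psi'}{\{X'_A,Z'_A\}}{\psi'}| \leq 2\epsilon + 2\sqrt{2\epsilon}$, whence $\delta^2 \leq \sqrt{2\epsilon}+O(\epsilon)$ and $\delta \leq (2\epsilon)^{1/4}+O(\sqrt\epsilon)$ via $\sqrt{a+b}\leq\sqrt a+\sqrt b$; the $O(\sqrt\epsilon)$ and $O(\epsilon^{3/4})$ tails explain the remaining $\sqrt{2\epsilon}$ and $(2\epsilon)^{3/4}$ contributions to $\epsilon_1$.

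Finally, condition \eqref{eq:th1cond2} on Bob's side is obtained by transferring the Alice bound. Using $X'_A\ket{\psi'}\approx X'_B\ket{\psi'}$, $Z'_A\ket{\psi'}\approx Z'_B\ket{\psi'}$ and the commutation $[X'_A,Z'_B]=[Z'_A,X'_B]=0$, a two-step substitution shows $X'_B Z'_B\ket{\psi'}\approx Z'_A X'_A\ket{\psi'}$ and the opposite ordering analogously, each with error $2\sqrt{2\epsilon}$. Adding these, $\norm{\{X'_B,Z'_B\}\ket{\psi'}} \leq \norm{\{X'_A,Z'_A\}\ket{\psi'}} + 4\sqrt{2\epsilon}$, which accounts for the $4\sqrt{2\epsilon}$ correction in $\epsilon_1$. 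The main conceptual obstacle throughout is precisely this amplification: the observed correlators constrain the anticommutator's \emph{expectation value} only to order $\sqrt\epsilon$, and it is the squaring/iteration trick built around $\tilde D_A\approx D'_B$ together with $(D'_B)^2=I$ that upgrades this to the \emph{norm} bound of order $\epsilon^{1/4}$ demanded by Theorem~\ref{theorem:localisometry}.
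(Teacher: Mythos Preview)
Your proof is correct and follows essentially the same route as the paper's: establish \eqref{eq:th1cond3}--\eqref{eq:th1cond4} directly from the $XX$ and $ZZ$ correlators, bound the expectation $\bracket{\psi'}{\{X'_A,Z'_A\}}{\psi'}$ via the $XZ$ correlator and \eqref{eq:th1cond4}, use this to control $\delta=\norm{(\tilde D_A-D'_B)\ket{\psi'}}=\epsilon'$, square up using $(D'_B)^2=I$ and $\norm{\tilde D_A}_\infty\leq\sqrt{2}$ to get $\norm{\{X'_A,Z'_A\}\ket{\psi'}}\leq 2(1+\sqrt{2})\epsilon'$, and then transfer to Bob with a $4\sqrt{2\epsilon}$ penalty. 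The only cosmetic difference is that you package the squaring step as the single identity $\tilde D_A^{\,2}-I=\tilde D_A(\tilde D_A-D'_B)+D'_B(\tilde D_A-D'_B)$, whereas the paper writes out the two norm estimates $\norm{(D'_B)^2\ket{\psi'}-D'_B\tilde D_A\ket{\psi'}}\leq\epsilon'$ and $\norm{\tilde D_A D'_B\ket{\psi'}-\tilde D_A^{\,2}\ket{\psi'}}\leq\sqrt{2}\,\epsilon'$ separately before combining them; the resulting constants are identical.
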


The proof is included in the Supplementary Information.  The critical step is to use the strong non-classical correlations between $D^{\prime}_{B}$ and both $X^{\prime}_{A}$ and $Z^{\prime}_{A}$ to establish that the latter approximately anti-commute.  Interestingly, $D^{\prime}_{B}$ is used only in this task, and does not play any role in the other estimations or the construction of the isometry.


\textit{Discussion. ---} We have presented robust self-testing bounds for the singlet from the two criteria of CHSH and Mayers-Yao. These tools can be used for the device-independent assessment of state preparation and measurement devices, in protocols that are based on these criteria. For instance, in \cite{Bardyn:2009:Device-independ} the authors defined the ``Mayers-Yao'' fidelity as a device independent state estimation parameter based on its CHSH violation. They conjectured a lower bound on this fidelity in terms of the CHSH value, giving a construction which saturates the bound.  However, no actual lower bound was given. Our robustness bound for self-testing using CHSH can be straightforwardly converted into such a lower bound, the dominant contribution to which is
\begin{equation}
	F_{MY}(\ket{\psi^{\prime}}) \gtrsim 1-\dfrac{1}{4}(9\sqrt{2}\epsilon+2^{\frac{1}{4}}100\epsilon^{\frac{1}{2}}+2^{\frac{3}{8}}60\epsilon^{\frac{3}{4}})
\end{equation}
where $\epsilon$ is defined as in (\ref{chsherror}). This bound is rather loose: the fidelity drops to $F_{MY}=20\%$ already for $\epsilon\approx 10^{-4}$.

More importantly, the framework introduced here allows one to generalize the concept of self testing in different ways. For instance, our framework is an interesting contrast to the previous device-independent work with the CHSH inequality \cite{Pironio:2009:Device-independ} \cite{Bardyn:2009:Device-independ}as we do not rely on Jordan's lemma \cite{Jordan:1875:Essai-sur-la-ge} to reduce the high dimension case to the qubit case.  Jordan's lemma only applies in the context of two measurement settings and two outcomes, which limits the applicability of the proof techniques used in these previous papers.  The current proof technique thus provides an opening for testing in scenarios with more settings or outcomes, with the Mayers-Yao scenario a concrete example.

Considering the similarities between the CHSH and Mayers-Yao experiments we use here, it is natural to ask whether they can be generalized to a larger class of experiments which can be used to self test singlets.  The framework we use here is a natural starting point for such an enquiry since it is agnostic as to the number of settings or outcomes, requiring only that a pair of anti-commuting operators can be found, or constructed as in the CHSH case.  It also naturally extends to multi-partite scenarios as in \cite{McKague:2010:Self-testing-gr}. 

{\bf Acknoledgements} We acknowledge discussions with Lana Sheridan, Rafael Rabelo, Melvyn Ho and Cai Yu. This work is funded by the Ministry of Education and the National Research Foundation, Singapore.


%

\newpage
\appendix
\section{Detailed derivation of the bounds used in Theorem \ref{theorem:localisometry}}

\begin{itemize}
\item Bound for the second line of (\ref{eq:phiphi}), the one for the third line being identical:
\ban ||(I+ Z_A')(I- Z_B' ) \ket{\psi'} ||&\leq& \\||(I- Z_A'Z_B' ) \ket{\psi'} ||+||(Z_A'-Z_B')\ket{\psi'} ||\stackrel{(\ref{eq:th1cond4})}{=}2\epsilon_2\,. \ean
\item Comparison between the first and the fourth line of (\ref{eq:phiphi}): we want to bound \ban
||X_A'X_B(I+Z_A')(I+Z_B')\ket{\psi'}-(I+Z_A')(I+Z_B')\ket{\psi'}||\,.
\ean
The trick consists in propagating $X_A'X_B'$ to the right using (\ref{eq:th1cond1}) and (\ref{eq:th1cond2}). This costs $4\epsilon_1$ and leads to 
\ban
||(I+Z_A')(I+Z_B')(X_A'X_B'-I)\ket{\psi'}||\,.
\ean Using (\ref{eq:th1cond3}), this can be replaced by zero at the cost of $4\epsilon_2$. 
\item Bound for $|\bracket{\psi'}{Z_A'}{\psi'}|$, the same holding for $|\bracket{\psi'}{Z_B'}{\psi'}|$: this proof uses routinely two arguments: (i) the fact that the operators are unitary, and (ii) the fact that if $||\ket{\varphi}||\leq \epsilon$, then $|\braket{\chi}{\varphi}|\leq \epsilon$ for all normalized $\ket{\chi}$. We need to establish two relations. From (i) and (\ref{eq:th1cond2}),
\ban ||Z_A'X_B'\ket{\psi'}-Z_A'X_A'\ket{\psi'}||\leq \epsilon_2 \ean. By inserting $0=X_A'Z_A'-X_A'Z_A'$, the triangle inequality and (\ref{eq:th1cond1}) lead to
\ban
||Z_A'X_B'\ket{\psi'}+X_A'Z_A'\ket{\psi'}||\leq 2\epsilon_1+\epsilon_2\,.
\ean Using (ii) with $\ket{\chi}=X_B'\ket{\psi'}$ and the unitarity of $X_B'$,
\ban
|\bracket{\psi'}{Z_A'}{\psi'}+\bracket{\psi'}{X_B'X_A'Z_A'}{\psi'}|\leq 2\epsilon_1+\epsilon_2\,.
\ean Finally, since the left hand side is an absolute value, the same holds for the conjugate; whence we find the first relation
\ban
|\bracket{\psi'}{Z_A'}{\psi'}+\bracket{\psi'}{Z_A'X_A'X_B'}{\psi'}|\leq 2\epsilon_1+\epsilon_2\,.
\ean
The second relation is \ban
|\bracket{\psi'}{Z_A'}{\psi'}-\bracket{\psi'}{Z_A'X_A'X_B'}{\psi'}|\leq \epsilon_2\,,
\ean obtained simply by combining (i) and (\ref{eq:th1cond3}) in the form $||Z_A'\ket{\psi'}-Z_A'X_A'X_B'\ket{\psi'}||\leq \epsilon_2$, then using (ii) with $\ket{\chi}=\ket{\psi'}$. The two relations together, by triangle inequality, imply $|\bracket{\psi'}{Z_A'}{\psi'}|\leq \epsilon_1+\epsilon_2$.

\item Bound for the norm of the state: notice first that $(1+Z_A')^2=2(1+Z_A')$ and similarly with $Z_B'$. Therefore we have
\ban
||(I+Z_A')(I+Z_B')\ket{\psi'}||&=&\\
2\sqrt{1+\bracket{\psi'}{Z_A'}{\psi'}+\bracket{\psi'}{Z_B'}{\psi'}+\bracket{\psi'}{Z_A'Z_B'}{\psi'}}\,.
\ean We have derived in the previous bullet
\ban
-(\epsilon_1+\epsilon_2)\leq \bracket{\psi'}{Z_A'}{\psi'}\leq \epsilon_1+\epsilon_2 
\ean and the same for $Z_B'$. As for the last term, it satisfies
\ban
1-\epsilon_2^2/2\leq \bracket{\psi'}{Z_A'Z_B'}{\psi'}\leq 1
\ean where the upper bound is trivial and the lower one is just a rewriting of (\ref{eq:th1cond4}). Neglecting the contribution in $\epsilon_2^2$, we find
\ban
\sqrt{1-\epsilon_1-\epsilon_2}\leq \frac{||(I+Z_A')(I+Z_B')\ket{\psi'}||}{2\sqrt{2}} \leq \sqrt{1+\epsilon_1+\epsilon_2}
\ean With the expansion $\sqrt{1+\delta}\leq 1+\delta/2$ we find that the error made in normalizing the state is at most $(\epsilon_1+\epsilon_2)/2$ as claimed.

\end{itemize}

\section{Detailed derivation of the bounds used in Theorem \ref{theorem:chshrobust}}

\begin{itemize}
\item Exact anti-commutation of $X^{\prime}_{B}$ and $Z^{\prime}_{B}$: first note that, $B_0'$ and $B_1'$ being hermitian and unitary operators, it holds $\left|B_0'+B_1'\right|=\sqrt{2 + M}$ and $\left|B_0' - B_1'\right|=\sqrt{2 - M}$ with $M=B_0'B_1' + B_1'B_0'$; thence these two operators commute, being analytic functions of the same operator.  Furthermore, both $B_0'$ and $B_1'$ commute with $M$ too, and therefore with both $\left|B_0' + B_1'\right|$ and $\left|B_0' -B_1'\right|$. Finally, it is easy to show that $B_0' + B_1'$ and $B_0' - B_1'$ anti-commute.

\item Derivation of \eqref{eq:a0a1} and \eqref{eq:b0b1}: the square of the CHSH operator is $C^2=4+[A_0',A_1'][B_1',B_0']$. Therefore the Cauchy-Schwartz inequality $|\bracket{\psi}{C^2}{\psi}|\geq |\bracket{\psi}{C}{\psi}|^2$ together with \eqref{chsherror} gives \ban\bracket{\psi'}{[A_0',A_1'][B_1',B_0']}{\psi'}  \geq 4 - \delta\ean with $\delta=4 \sqrt{2}\epsilon - \epsilon^{2}$. Explicitly, the l.h.s is the algebraic sum of $\bracket{\psi'}{A_0'A_1'B_1'B_0'}{\psi'}$ and three similar terms, each bounded by 1 in absolute value since each operator has $\infty$-norm equal to 1. Therefore, loosely speaking, we have $\bracket{\psi'}{A_0'A_1'B_1'B_0'}{\psi'}\simeq \bracket{\psi'}{A_1'A_0'B_0'B_1'}{\psi'}\simeq 1$ and $\bracket{\psi'}{A_0'A_1'B_0'B_1'}{\psi'}\simeq \bracket{\psi'}{A_1'A_0'B_1'B_0'}{\psi'}\simeq -1$. 
Now, from the precise relation
\ban\bracket{\psi'}{A_0'A_1'B_0'B_1'+A_1'A_0'B_1'B_0'}{\psi}\leq -2+\delta\,.\ean
we obtain \ban\norm{(A_0'A_1'+B_1'B_0')\ket{\psi'}}&& \\= \sqrt{2+\bracket{\psi'}{A_0'A_1'B_0'B_1'+A_1'A_0'B_1'B_0'}{\psi}}\leq \sqrt{\delta}\,.\ean
In a similar way, one proves that $\norm{(A_0'A_1'-B_0'B_1')\ket{\psi'}}$, $\norm{(A_1'A_0'-B_1'B_0')\ket{\psi'}}$ and $\norm{(A_1'A_0'+B_0'B_1')\ket{\psi'}}$ are also bounded above by $\sqrt{\delta}$. The relations \eqref{eq:a0a1} and \eqref{eq:b0b1} follow from these four, using the triangle inequality, leading to $\epsilon_1=\sqrt{\delta}=2\sqrt{\epsilon\sqrt{2}}-O(\epsilon^{3/2})$.

\item Bound for $\norm{\left(X^{\prime}_{A} - (B_0'+B_1')/\sqrt{2}\right) \ket{\psi^{\prime}}}$: we open up the norm and use $(B_0'+B_1')^2=2+\{B_0,B_1\}$ and \eqref{eq:b0b1} to obtain
\ban
\norm{\left(X^{\prime}_{A} - (B_0'+B_1')/\sqrt{2}\right) \ket{\psi^{\prime}}}\\\leq \sqrt{2+\epsilon_1-\sqrt{2}\bracket{\psi'}{X^{\prime}_{A}(B_0'+B_1')}{\psi'}}
\ean and we have to find an estimate for the last term.

For this, we start by noticing that the definition of the norm and \eqref{eq:b0b1} imply
$\sqrt{2}\sqrt{1 - \epsilon_{1}} \leq \norm{\left( B_0' \pm B_1'   \right) \ket{\psi^{\prime}}}  \leq \sqrt{2}\sqrt{1 + \epsilon_{1}}$. In particular, the scalar product with the normalized vector $A_1'\ket{\psi}$ must satisfy $|\bra{\psi^{\prime}}A_1' \left( B_0' - B_1'\right) \ket{\psi}| \leq \sqrt{2}\sqrt{1 + \epsilon_{1}}$. From \eqref{chsherror}, recalling that $X_A'=A_0'$, we find the desired bound
\begin{equation} 
	\bracket{\psi^{\prime}} {X_A' (B_0' + B_1')}{\psi^{\prime}} \geq \sqrt{2}(1-\epsilon') \label{boundfora0}
\end{equation}
where $\epsilon' =  \epsilon/\sqrt{2} + \sqrt{1 + \epsilon_{1} }  -1= \sqrt{\epsilon\sqrt{2}}-O(\epsilon^{3/2})$. All in all,
\ban
\norm{\left(X^{\prime}_{A} - (B_0'+B_1')/\sqrt{2}\right) \ket{\psi^{\prime}}}&\leq& \sqrt{\epsilon_1+2\epsilon'}\\&=&2(\epsilon\sqrt{2})^{1/4}-O(\epsilon^{3/2})\,.
\ean

\item Bound for $\norm{\left(X^{\prime}_{B} - (B_0'+B_1')/\sqrt{2}\right) \ket{\psi^{\prime}}}$: we start by opening up the norm as before, using the additional identities $M/|M|=1$ and $M^2/|M|=|M|$, to reach \ban
\norm{\left(X^{\prime}_{B} - (B_0'+B_1')/\sqrt{2}\right) \ket{\psi^{\prime}}}\\\leq \sqrt{2+\epsilon_1-\sqrt{2}\bracket{\psi'}{|B_0'+B_1'|}{\psi'}}\,.
\ean Now, $\bracket{\psi'}{|B_0'+B_1'|}{\psi'}=\bracket{\psi'}{|A_0'(B_0'+B_1')|}{\psi'}\geq \bracket{\psi'}{A_0'(B_0'+B_1')}{\psi'} \geq \sqrt{2}\sqrt{1+\epsilon'}$ where the last inequality is \eqref{boundfora0}. Then one finds, as above:
\ban
\norm{\frac{ B_0'+B_1' }{\sqrt{2}} \ket{\psi^{\prime}} - X^{\prime}_{B} \ket{\psi^{\prime}}} \leq \sqrt{\epsilon_{1} + 2 \epsilon'}.	
\ean
The triangle inequality applied to this and the previous estimate leads to
\ban
\norm{\left(X^{\prime}_{A} - X^{\prime}_{B}\right) \ket{\psi^{\prime}}}&\leq& 2\sqrt{\epsilon_1+2\epsilon'}\\&=&4(\epsilon\sqrt{2})^{1/4}-O(\epsilon^{3/2})\,.
\ean

\end{itemize}

\section{Detailed proof of Theorem \ref{thm:myrobust}.}
\label{sec:myproof}
For reference, let us spell out explicitly the hypotheses \eqref{mycond} that are used in the proof:
\ba
\bra{\psi^{\prime}} X^{\prime}_{A} X^{\prime}_{B} \ket{\psi^{\prime}} &\geq& 1 - \epsilon \label{hypxx}\\
\bra{\psi^{\prime}} Z^{\prime}_{A} Z^{\prime}_{B} \ket{\psi^{\prime}} &\geq& 1 - \epsilon \label{hypzz}\\
\bra{\psi^{\prime}}X^{\prime}_{A} Z^{\prime}_{B} \ket{\psi^{\prime}} &\leq& \epsilon\label{hypxz}\\
\bra{\psi^{\prime}}Z^{\prime}_{A} D^{\prime}_{B} \ket{\psi^{\prime}}&\leq& \frac{1}{\sqrt{2}} + \epsilon \label{hypzd}\\
\bra{\psi^{\prime}}X^{\prime}_{A} D^{\prime}_{B}\ket{\psi^{\prime}} &\leq& \frac{1}{\sqrt{2}} + \epsilon \label{hypxd}
\ea

The simple opening of the norm in \eqref{hypxx} and \eqref{hypzz} leads directly to \eqref{eq:th1cond3} and \eqref{eq:th1cond4} in the form
\ba \norm{X^{\prime}_{A}\ket{\psi^{\prime}} - X^{\prime}_{B} \ket{\psi^{\prime}}} &\leq& \sqrt{2 \epsilon}\label{eq:cond3my}\\
\norm{Z^{\prime}_{A}\ket{\psi^{\prime}} - Z^{\prime}_{B} \ket{\psi^{\prime}}} &\leq& \sqrt{2 \epsilon}.\label{eq:cond4my}\ea

The two other conditions require a bit more of work. First we establish
\ba
\norm{\frac{X^{\prime}_{A} + Z^{\prime}_{A}}{\sqrt{2}}  \ket{\psi^{\prime}}}&=& \sqrt{1+\bra{\psi^{\prime}} Z^{\prime}_{A} X^{\prime}_{A} \ket{\psi^{\prime}}} \nonumber \\&\leq&  \sqrt{1 + \epsilon + \sqrt{2\epsilon}}:\label{eq:myproof1}
\ea
indeed, from \eqref{eq:cond4my} it follows $\bra{\psi^{\prime}}X^{\prime}_{A} Z^{\prime}_{A}\ket{\psi^{\prime}} - \bra{\psi^{\prime}}X^{\prime}_{A}Z^{\prime}_{B}\ket{\psi^{\prime}}| \leq \sqrt{2 \epsilon}$ since $\norm{\bra{\psi^{\prime}} X^{\prime}_{A}}_{\infty} = 1$; whence $\bra{\psi^{\prime}} Z^{\prime}_{A} X^{\prime}_{A} \ket{\psi^{\prime}} \leq \epsilon + \sqrt{2 \epsilon}$ follows from \eqref{hypxz}.

From \eqref{eq:myproof1} and the hypotheses \eqref{hypzd} and \eqref{hypxd} it follows
\ban
\norm{D^{\prime}_{B} \ket{\psi^{\prime}} - \frac{X^{\prime}_{A} + Z^{\prime}_{A}}{\sqrt{2}} \ket{\psi^{\prime}}}  &\leq& \sqrt{(1+ 2 \sqrt{2}) \epsilon +  \sqrt{2\epsilon}} = \epsilon'.
\ean Since $\norm{D^{\prime}_{B}}_{\infty} = \norm{X^{\prime}_{A}}_{\infty}= \norm{Z^{\prime}_{A}}_{\infty} = 1$, we obtain
\begin{align*}
\norm{\left(D^{\prime}_{B}\right)^{2} \ket{\psi^{\prime}} - D^{\prime}_{B} \frac{X^{\prime}_{A} + Z^{\prime}_{A}}{\sqrt{2}} \ket{\psi^{\prime}}} & \leq & \epsilon' \\
\norm{\frac{X^{\prime}_{A} + Z^{\prime}_{A}}{\sqrt{2}}  D^{\prime}_{B} \ket{\psi^{\prime}} - \left(\frac{X^{\prime}_{A} + Z^{\prime}_{A}}{\sqrt{2}} \right)^{2} \ket{\psi^{\prime}}} & \leq & \sqrt{2} \epsilon'.
\end{align*}
Notice that the second bound comes from the conservative estimate $\norm{(X_A'+Z_A')/\sqrt{2}}_\infty\leq \sqrt{2}$, but this is the best one can ensure at this stage: indeed, we know from \eqref{eq:myproof1} that $(X_A'+Z_A')/\sqrt{2}$ is almost unitary \textit{when it acts on $\ket{\psi'}$}, but we know nothing about its action on other states.

From the last two estimates, together with the fact that $\left(D^{\prime}_{B}\right)^{2}$ is the identity, it follows that $\norm{\left(1 - \left((X^{\prime}_{A} + Z^{\prime}_{A})/\sqrt{2}\right)^{2}\right) \ket{\psi^{\prime}}} \leq  (1+\sqrt{2})\epsilon'$ i.e.
\ba
\norm{X^{\prime}_{A} Z^{\prime}_{A}\ket{\psi^{\prime}} + Z^{\prime}_{A} X^{\prime}_{A} \ket{\psi^{\prime}}}  \leq  2(1 + \sqrt{2}) \epsilon',\label{eq:cond1my}
\ea
which establishes (\ref{eq:th1cond1}).

Finally, by evaluating \eqref{eq:cond3my} on a suitable unit vector we have $\norm{Z^{\prime}_{A}X^{\prime}_{A}\ket{\psi^{\prime}} - Z^{\prime}_{A}X^{\prime}_{B}\ket{\psi^{\prime}}} \leq \sqrt{2 \epsilon}$; analogously, from \eqref{eq:cond4my} we have $\norm{X^{\prime}_{B}Z^{\prime}_{A}\ket{\psi^{\prime}} - X^{\prime}_{B}Z^{\prime}_{B}\ket{\psi^{\prime}}} \leq \sqrt{2 \epsilon}$. The addition of these two gives
\ban
\norm{Z^{\prime}_{A} X^{\prime}_{A}\ket{\psi^{\prime}} - X^{\prime}_{B}Z^{\prime}_{B}} \leq 2 \sqrt{2 \epsilon}.
\ean
Similarly we may obtain
\ban
\norm{X^{\prime}_{A} Z^{\prime}_{A}\ket{\psi^{\prime}} - Z^{\prime}_{B}X^{\prime}_{B}} \leq 2 \sqrt{2 \epsilon}.
\ean
From the last two inequalities and \eqref{eq:cond1my} we reach
\begin{equation}
\norm{X^{\prime}_{B} Z^{\prime}_{B}\ket{\psi^{\prime}} + Z^{\prime}_{B} X^{\prime}_{B} \ket{\psi^{\prime}}}  \leq  2(1 + \sqrt{2}) \epsilon' + 4 \sqrt{2\epsilon},
\end{equation}
which establishes the final condition in (\ref{eq:th1cond2}). The value of $\epsilon_1$ given in the main text uses
\ban
\epsilon'&=&(2\epsilon)^{1/4}\left(1+\frac{1+2\sqrt{2}}{2\sqrt{2}}\sqrt{\epsilon}\right)-O(\epsilon^{5/4})\,.
\ean

\end{document}